\newcommand{\be}{\begin{equation}}
\newcommand{\ee}{\end{equation}}
\newcommand{\ba}{\begin{eqnarray}}
\newcommand{\ea}{\end{eqnarray}}
\newtheorem{theorem}{Theorem}
\newtheorem{corollary}{Corollary}
\newtheorem{lemma}{Lemma}
\def\>{\rangle}
\def\<{\langle}
\begin{document}
	
\title{Measurement-Device-Independent Schmidt Number Certification of All Entangled States}	

\author{Saheli Mukherjee}
\email{mukherjeesaheli95@gmail.com}
\affiliation{S. N. Bose National Centre for Basic Sciences, Block JD, Sector III, Salt Lake, Kolkata 700 106, India}

\author{Bivas Mallick}
\email{bivasqic@gmail.com}
\affiliation{S. N. Bose National Centre for Basic Sciences, Block JD, Sector III, Salt Lake, Kolkata 700 106, India}

\author{Arun Kumar Das}
\email{akd.qip@gmail.com}
\affiliation{S. N. Bose National Centre for Basic Sciences, Block JD, Sector III, Salt Lake, Kolkata 700 106, India}

\author{Amit Kundu}
\email{amit8967@gmail.com}
\affiliation{S. N. Bose National Centre for Basic Sciences, Block JD, Sector III, Salt Lake, Kolkata 700 106, India}

\author{Pratik Ghosal}
\email{ghoshal.pratik00@gmail.com}
\affiliation{S. N. Bose National Centre for Basic Sciences, Block JD, Sector III, Salt Lake, Kolkata 700 106, India}
    
\begin{abstract}
Bipartite quantum states with higher Schmidt numbers have been shown to outperform those with lower Schmidt numbers in various quantum information processing tasks, highlighting the operational advantage of entanglement dimensionality. Certifying the Schmidt number of such states is therefore crucial for efficient resource utilisation. Ideally, this certification should rely as little as possible on the certifying devices to ensure robustness against their potential imperfections. Fully device-independent certification via Bell-nonlocal games offers strong robustness but suffers from fundamental limitations: it cannot certify the Schmidt number of all entangled states. We demonstrate that this insufficiency of Bell-nonlocal games is not limited to entangled states that do not exhibit Bell-nonlocality. Specifically, we prove the existence of Bell-nonlocal states whose Schmidt number cannot be certified by any Bell-nonlocal game when the parties are restricted to local projective measurements. To overcome this, we develop a measurement-device-independent certification method based on semiquantum nonlocal games, which assume trusted preparation devices but treat measurement devices as black boxes. We prove that for any bipartite state with Schmidt number exceeding $r$, there exists a semiquantum nonlocal game that can certify its Schmidt number. Finally, we provide an explicit construction of such a semiquantum nonlocal game based on an optimal Schmidt number witness operator.
\end{abstract}
\maketitle

\section{Introduction}\label{intro}

Entanglement is a distinctive feature of composite quantum systems that marks the fundamental distinction between quantum physics and classical theory \cite{horodecki2009quantum}. Beyond its profound foundational significance, entanglement is equally important---if not more so---from an operational perspective. It has been established as a key resource in various information processing tasks, including quantum cryptography \cite{PhysRevLett.67.661, PhysRevLett.83.648}, quantum teleportation \cite{PhysRevLett.70.1895}, entanglement-assisted quantum communication \cite{PhysRevLett.69.2881, bennett2002entanglement}, and quantum computation \cite{jozsa2003role}, among others, providing a ``quantum advantage'' over classical means.

Interestingly, the performance of such tasks depends not only on the presence of entanglement but also on its nature and degree. 
For instance, in quantum teleportation and dense-coding tasks, entangled states with zero distillable entanglement---called the bound entangled states---do not provide any advantage \cite{PhysRevA.60.1888, horodecki2001classical, 10315956}. On the other hand, in quantum cryptography, bound entangled states can be useful for distilling private key under one-way communication \cite{PhysRevLett.94.160502, 4529274}, whereas antidegradable (symmetric extendible) states are not \cite{PhysRevA.79.042329}. Therefore, for optimal performance in a given task, it is essential to choose entangled states with the appropriate characteristics.

In general, the full characterization of the entanglement of a given state is not straightforward; even in the simplest case of bipartite systems, there exist multiple independent and inequivalent entanglement measures, each quantifying a different aspect of it \cite{Plenio:2007zz}. One particular aspect of entanglement that we focus on in this paper is its dimensionality. The \textit{dimension of entanglement} refers to the number of degrees of freedom that actively contribute to the entanglement, rather than simply the local dimensions of the subsystems. For bipartite systems, this is quantified by the \textit{Schmidt number} of the state \cite{terhal2000schmidt}. High-dimensional entanglement is more robust against noise than low-dimensional entanglement and has been shown to offer advantages in a variety of tasks \cite{PhysRevA.71.044305, lanyon2009simplifying, PhysRevA.88.032309, Mirhosseini_2015, bae2019more, cozzolino2019high, wang2020qudits, kues2017chip, erhard2018twisted, erhard2020advances, cerf2002security, sheridan2010security, cozzolino2019orbital, bouchard2018experimental}. Its experimental realisation has also been achieved \cite{dada2011experimental, malik2016multi}.

Given the resourcefulness of high-dimensional entanglement, a natural question that arises is its certification---specifically, determining whether a given bipartite state $\rho_{AB}$ has Schmidt number greater than a certain value, say $r$. This problem is of considerable practical importance and has attracted significant attention in recent years. Several methods have been proposed to address this question, each exploiting different properties of quantum states \cite{bavaresco2018measurements, lib2024experimental, dkabrowski2018certification, huang2016high, liu2023characterizing, tavakoli2024enhanced, zhang2024analyzing, nzrc-8yrt}.

Perhaps the simplest and most commonly used approach is based on Schmidt number witnesses \cite{sanpera2001schmidt, chruscinski2014entanglement, bavaresco2018measurements, Shi_2024, li2025high}. States with Schmidt number less than or equal to $r$ form a convex and compact set \cite{terhal2000schmidt}. For any state lying outside this set, there exists a Hermitian operator called the Schmidt number witness, whose expectation value is strictly negative for that state, while for all states within the set, it is positive semidefinite. One can thus certify whether the Schmidt number of a given state exceeds $r$ by decomposing the corresponding witness operator into local observables, measuring them, and evaluating the expectation value of the witness from the observed data. However, a practical drawback of this method is that it requires perfect implementation of the measurement devices to accurately evaluate the expectation value. Inaccuracies in this implementation can lead to false positives:  states with a lower Schmidt number being falsely certified to be having a higher Schmidt number. One possible way to address this issue is to fully characterize all the devices involved in the certification procedure and to account for all possible sources of error that may arise, which is extremely challenging in practice.\footnote{ Since one needs to have a precise knowledge about the internal workings of the devices in order to do so.} A more practical alternative is to seek methods that rely as little as possible on the certifying devices.

In this paper, we explore such methods. A fully device-independent (DI) approach does not rely on the internal workings of any certifying devices---making it inherently robust to their imperfections---and instead certifies relevant properties of the system solely from the observed experimental probabilities. For Schmidt number certification---as for entanglement certification---the most natural setting to consider is that of a Bell-nonlocal game \cite{PhysicsPhysiqueFizika.1.195, RevModPhys.86.419}. In this framework, two spatially separated parties (verifiers) share the state whose Schmidt number is to be certified. They receive classical indices as inputs. Based on these inputs, each party performs local measurements on their respective subsystems, yielding classical outputs. From the resulting input-output correlation statistics, one constructs an inequality that is satisfied by all states with Schmidt number less than or equal to $r$. Violation of this inequality implies that the shared state has Schmidt number greater than $r$. For example, the set of inequalities used to witness \textit{Bell-nonlocality} \cite{RevModPhys.86.419} forms a specific subclass whose violation certifies that the shared state is entangled---that is, it has Schmidt number greater than one.

However, this approach generally cannot certify the Schmidt number of all entangled states (see also \cite{hirsch2020schmidt}). A simple argument for this limitation stems from the existence of entangled states that admit local hidden variable (LHV) models \cite{werner1989quantum, barrett2002nonsequential}. Such states, while entangled, do not violate any Bell inequality, rendering their entanglement undetectable within this framework. Since entanglement certification, i.e., verifying whether a state has Schmidt number greater than one, is a special case of Schmidt number certification, the limitations of the Bell-nonlocal-game-based approach naturally extend to the more general task. At this point, it is reasonable to question whether these limitations are specific to entangled states that do not exhibit Bell-nonlocality. In other words, for any state that \textit{is} Bell-nonlocal, is the Bell-nonlocal-game-based approach sufficient to certify that its Schmidt number is greater than $r$, for all values of $r \geq 2$?\footnote{Since these states are by definition nonlocal, Bell-nonlocal games can already certify that they have a Schmidt number greater than $1$.} In this work, we partially answer this question in the negative.

Nevertheless, we show that the Schmidt number of all entangled states can be certified by relaxing the demand for full device-independence to a partially device-independent method. We propose a measurement-device-independent (MDI) Schmidt number certification method using the framework of \textit{semiquantum nonlocal games} \cite{buscemi2012all}. Such games generalize the Bell-nonlocal game setting by replacing classical inputs with classically indexed quantum inputs. While trust on the devices preparing the input quantum states is assumed, this method remains independent of the specific measurement devices used by the parties. Trusting the preparation devices is a somewhat more natural assumption than trusting the measurement devices, as detection devices are generally open to the external environment, which can introduce imperfections due to environmental influence. In previous works,  the semiquantum nonlocal game framework has been adopted for MDI certification \cite{branciard2013measurement} and characterization of entanglement \cite{PhysRevA.94.012343, PhysRevLett.118.150505, goh2016measurement, PhysRevA.98.052332, PhysRevLett.132.110204, PhysRevA.95.042340}, as well as for MDI certification of beyond-quantum correlations \cite{PhysRevA.106.L040201, yu2024measurement}.\footnote{This framework has also been invoked in the context of steering \cite{PhysRevA.87.032306,kocsis2015experimental,ku2018measurement,PhysRevA.101.012333}, non-classical teleportation \cite{PhysRevLett.119.110501}, and classical simulation of quantum correlations \cite{rosset2013entangled, PhysRevA.88.032118}.} 

Apart from the fully DI and MDI approaches, there also exist certification methods within the semi-device-independent (semi-DI) framework, where the requirement of device independence is placed on only one party. Quantum steering \cite{PhysRevLett.98.140402} naturally fits into this setting. Entanglement can be certified by the violation of certain steering inequalities \cite{PhysRevA.87.062103,PhysRevA.90.050305,PhysRevA.93.012108,PhysRevA.95.032128,PhysRevA.93.012108}. More recently, Schmidt-number certification in steering scenarios has also been introduced in \cite{PhysRevLett.126.200404}, which proposed a two-setting steering inequality whose violation certifies genuinely high-dimensional steering unattainable with lower-dimensional entanglement. This was extended in \cite{PhysRevLett.131.010201}, where necessary and sufficient conditions for high-dimensional entanglement certification were provided via a hierarchy of semidefinite programs, though with rapidly increasing computational cost. To address this, \cite{PhysRevLett.134.090802} introduced a scheme whose computational cost is independent of the Schmidt number, thereby enabling efficient certification.

In this work, however, we restrict attention to MDI certification methods.

An outline of the paper and a summary of our main results are as follows:
\begin{itemize}
    \item In Sec .~\ref {SN}, we introduce the concept of the Schmidt number for bipartite states, along with Schmidt number witness operators, as a prerequisite.
    
    \item In Sec .~\ref {Bell-DI}, we describe the device-independent approach to Schmidt number certification via Bell-nonlocal games and analyze its limitations. We prove a general no-go theorem: there exist Bell-nonlocal states with Schmidt number greater than $r$ (where $r \geq 2$) that cannot be certified through any Bell-nonlocal game if the parties are restricted to local projective measurements [Theorem~\ref{theo1}]. As a concrete example, we present a family of Bell-nonlocal states with Schmidt number $3$ whose correlations under such measurements can be simulated by states with Schmidt number at most $2$. We extend this result to general positive-operator-valued measurements (POVMs) in a specific setting, namely, two-output Bell-nonlocal games [Corollary~\ref{cor1}].
    
    \item In Sec.~\ref{MDI}, we present our measurement-device-independent Schmidt number certification method. We prove the existence of a semiquantum nonlocal game for any bipartite state $\rho_{AB}$ with Schmidt number greater than $r$, which can be used to certify its Schmidt number [Theorem~\ref{theo2}]. We illustrate how to construct such semiquantum games from Schmidt number witnesses and explicitly discuss a concrete example using an optimal Schmidt number witness operator.

    \item Finally, we conclude in Sec.~\ref{conclusion} with a discussion of possible future directions.
\end{itemize}

\section{Schmidt number of bipartite states and its witness}\label{SN}

Consider a pure bipartite quantum state $\ket{\psi}_{AB}\in \mathbb{C}^{d_A} \otimes \mathbb{C}^{d_B}$. It can be expressed in Schmidt decomposition form \cite{peres1997quantum, nielsen2010quantum} as:
\begin{equation}
    \ket{\psi}_{AB}= \sum_{i=1}^k \sqrt{{\lambda}_i} {\ket{u_i}}_A {\ket{v_i}}_B, 
\end{equation}
where $\lambda_i \geq 0$ for all $i$, $\sum_i \lambda_i = 1$, and $\{\ket{u_i}_A\}$ and $\{\ket{v_i}_B\}$ are orthonormal bases for $\mathbb{C}^{d_A}$ and $\mathbb{C}^{d_B}$, respectively. The quantities $\sqrt{\lambda_i}$ are the Schmidt coefficients of $\ket{\psi}_{AB}$, and $1\leq k \leq \min \{d_A, d_B\}$ denotes the number of non-zero Schmidt coefficients, called the Schmidt rank of the state. The state $\ket{\psi}_{AB}$ is entangled if and only if its Schmidt rank is strictly greater than 1.

Apart from determining whether a state is entangled or not, the Schmidt rank also quantifies how many of the local dimensions effectively contribute to the entanglement. For example, consider the following $\mathbb{C}^3\otimes\mathbb{C}^3$ state
\begin{align*}
    \ket{\psi}_{AB}=&\frac{1}{2}\ket{0}_A\ket{0}_B+\frac{1}{4}\ket{0}_A\ket{1}_B+\frac{1}{4}\ket{0}_A\ket{2}_B+\frac{1}{2}\ket{1}_A\ket{0}_B\\
    &+\frac{1}{2}\ket{2}_A\ket{0}_B-\frac{1}{4}\ket{2}_A\ket{1}_B-\frac{1}{4}\ket{2}_A\ket{2}_B.
\end{align*}
One can verify from its Schmidt decomposition:
\begin{align*}
    \ket{\psi}_{AB}=\frac{\sqrt{3}}{2}\cdot\frac{1}{\sqrt{3}}\left(\ket{0}+\ket{1}+\ket{2}\right)_A\ket{0}_B+\frac{1}{2}\cdot\frac{1}{\sqrt{2}}\left(\ket{0}-\ket{2}\right)_A\frac{1}{\sqrt{2}}\left(\ket{1}+\ket{2}\right)_B,
\end{align*}
that its Schmidt rank is 2. Thus, although the local dimensions of the state are 3, its entanglement is effectively confined to a 2-dimensional subspace. In this sense, $\ket{\psi}_{AB}$ is no different in its entanglement content than the $\mathbb{C}^2 \otimes \mathbb{C}^2$ state $\ket{\phi}_{AB}=\frac{\sqrt{3}}{2}\ket{0}_A\ket{0}_B+\frac{1}{2}\ket{1}_A\ket{1}_B$. In other words, the Schmidt rank of a pure state captures its \textit{entanglement dimension}.

\begin{figure}[t!]
\centering
\includegraphics[scale=0.4]{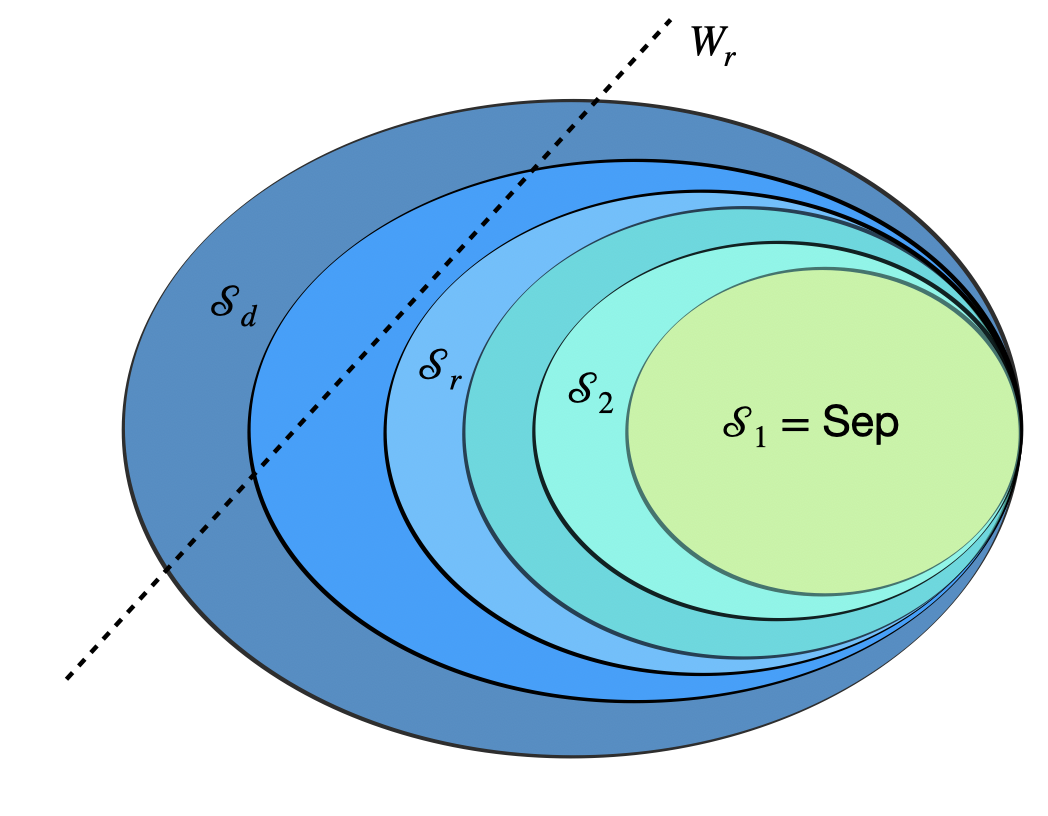}
\caption{Illustration of the nested subset structure of the sets $\mathcal{S}_r \subseteq \mathcal{D}(\mathbb{C}^{d_A}\otimes \mathbb{C}^{d_B})$ consisting of states with Schmidt number at most $r$, where $1 \leq r \leq d$ and $d=\min\{d_A,d_B\}$. The outermost set $\mathcal{S}_d$ corresponds to all states. $W_r$ is a witness operator that detects states with Schmidt number greater than $r$: its expectation value is positive semidefinite for all states lying right to it and strictly negative for any state lying left to it.}\label{fig1}
\vspace{-.5cm}
\end{figure}

Schmidt number generalizes this concept to mixed states \cite{terhal2000schmidt}. A mixed bipartite state $\rho_{AB}$ can be written (non-uniquely) as a convex combination of pure states:
$$\rho_{AB}= \sum_{k} p_k {\ket{{\psi_k}}}_{AB} {\bra{{\psi}_k}},$$ 
with $\ket{\psi_k}_{AB}\in\mathbb{C}^{d_A}\otimes \mathbb{C}^{d_B}$, $p_k \ge 0$ and $\sum_k p_k=1$. Its Schmidt number is defined as
\begin{equation}
    \text{SN}(\rho_{AB}) = \min_{\{p_{k}, \ket{\psi_{k}}_{AB}\}} \hspace{0.1cm}\left[\max_{k} \hspace{0.1cm} \text{SR}(\ket{{\psi_k}}_{AB})\right]
\end{equation}
where $\text{SR}(\ket{{\psi_k}}_{AB})$ denotes the Schmidt rank of $\ket{{\psi_k}}_{AB}$ and the minimization is over all possible pure state decompositions of $\rho_{AB}$. It follows that for any state $\rho_{AB}$, $1 \leq \text{SN}(\rho_{AB}) \leq \min \{d_A, d_B\}$, and for any separable state $\rho^{\text{sep}}_{AB}$, $\text{SN}(\rho^{\text{sep}}_{AB}) = 1$. Beyond characterizing the entanglement dimension of a system, the Schmidt number of a state is also closely connected to the entanglement of formation. In particular, the one-shot zero-error entanglement cost of a state is determined by the logarithm of its Schmidt number \cite{buscemi2011entanglement}.

The definition of Schmidt number extends naturally to positive operators, which are simply unnormalized density operators. As such, a positive operator has the same Schmidt number as the corresponding normalized state. This extension will be important for some of our results.

Let us now consider the set of density operators on $\mathbb{C}^{d_A}\otimes\mathbb{C}^{d_B}$ having Schmidt number $r$ or less, denoted by $\mathcal{S}_r$. This set is convex and compact within the real vector space of density operators, and satisfies the following chain of inclusions \cite{terhal2000schmidt}:
\begin{align}
    \mathcal{S}_1 \subset \mathcal{S}_2 \subset\cdots \subset \mathcal{S}_r\cdots\subset\mathcal{S}_d,~~\text{where}~d=\min\{d_A,d_B\}.  
\end{align}
Given a state $\rho_{AB}$, one can determine whether its Schmidt number exceeds $r$ by invoking the Hahn–Banach separation theorem \cite{holmes2012geometric}, which ensures that for any state $\rho_{AB}\notin\mathcal{S}_r$ there exists a Hermitian operator $W_r$, such that
\begin{align}\label{witness}
    & \mathrm{Tr}[W_r \rho_{AB}] < 0, \nonumber \\
    \text{and} \quad & \mathrm{Tr}[W_r \sigma_{AB}] \geq 0, \quad \forall \sigma_{AB} \in \mathcal{S}_r.
\end{align}
The operator $W_r$ is referred to as a \textit{Schmidt number witness} \cite{chruscinski2014entanglement, Shi_2024, sanpera2001schmidt}, as the real-valued linear functional $\mathrm{Tr}[W_r (\cdot)]$ defines a hyperplane that separates states outside $\mathcal{S}_r$ from those within (see Fig. \ref{fig1}).

Note that, since the Schmidt number witness is a linear functional, a single witness cannot, in general, detect all states outside $\mathcal{S}_r$. However, it remains an open problem how to construct an appropriate witness \(W_r\) tailored to a given state \(\rho_{AB}\). Even in the special case \(r = 1\), the existence of such a construction---i.e., generating an entanglement witness directly from the full classical description of \(\rho_{AB}\)---would essentially solve the problem of deciding whether \(\rho_{AB}\) is entangled. Yet, it is well established that detecting entanglement from the classical description of a state is NP-hard \cite{gharibian2008strong}. This complexity extends to higher values of \(r\) as well. Consequently, an explicit functional dependence of \(W_r\) on \(\rho_{AB}\) remains unknown. Nevertheless, Ref.~\cite{sanpera2001schmidt} introduced a family of Schmidt-number witnesses, called \textit{optimal Schmidt-number witnesses}, which are independent of \(\rho_{AB}\) but depend explicitly on \(r\). We discuss these witnesses in Sec.~\eqref{constructionoptimal}.

In practice, to verify whether a given bipartite state $\rho_{AB}$ has Schmidt number greater than $r$, two spatially separated parties can select an appropriate Schmidt number witness operator $W_r$ that detects the state. While the operator $W_r$ itself is not directly measurable, it can always be decomposed as a linear combination of locally measurable observables:
\begin{align}
W_r = \sum_i \alpha_i~(P^i_A \otimes Q^i_B),
\end{align}
where $\{P^i_{A}\}_i$ and $\{Q^i_{B}\}_i$ are observables on $\mathcal{H}_{A}$ and $\mathcal{H}_{B}$, respectively, and $\alpha_i$'s are real coefficients. By locally measuring the observables $P^i_A$ and $Q^i_B$, the parties can estimate the expectation values $\langle P^i_A \otimes Q^i_B\rangle_{\rho_{AB}}$ from experimental statistics. Using these, they can compute the expectation value of the witness $\langle W_r\rangle_{\rho_{AB}}=\sum_i \alpha_i~\langle P^i_A \otimes Q^i_B\rangle_{\rho_{AB}}$, and verify that $\rho_{AB}\notin\mathcal{S}_r$.

However, as mentioned earlier, this method of certification is not infallible against imperfections of the measurement devices. In the following section, we discuss a fully device-independent certification approach based on Bell-nonlocal games and examine its limitations.

\section{Device-Independent Schmidt Number Certification Using Bell-nonlocal games and its limitation}\label{Bell-DI}

Consider two distant parties (verifiers), Alice and Bob, who share a bipartite quantum state $\rho_{AB}$ and aim to verify, in a fully device-independent (DI) manner, whether the Schmidt number of $\rho_{AB}$ exceeds a given threshold $r$. Bell-nonlocal games provide a natural framework for such DI certification. In this setting, Alice and Bob are each provided with classical inputs $x$ and $y$, sampled from two independent finite index sets $\mathcal{X}$ and $\mathcal{Y}$, according to probability distributions $p(x)$ and $q(y)$, respectively. Upon receiving their inputs, they perform local measurements $\{M^{a|x}_A\}$ and $\{N^{b|y}_B\}$ on their respective subsystems, producing classical outputs $a$ and $b$ corresponding to the measurement outcomes (see Fig.~\ref{bell-nonlocal}). While the parties may share an arbitrary amount of pre-established classical randomness, they are not allowed to communicate during the game.

Repeating this game a sufficient number of times, the following correlation is generated:
\begin{align}
    p(a,b|x,y) = \Tr\left[(M^{a|x}_A \otimes N^{b|y}_B) \, \rho_{AB}\right].
\end{align}
From this correlation, Alice and Bob can conclude that $\text{SN}(\rho_{AB}) > r$ if no state $\sigma_{AB}$ with $\text{SN}(\sigma_{AB}) \leq r$ can reproduce the correlation, regardless of the choice of local measurements.

\begin{figure}[t!]
\centering
\includegraphics[scale=0.6]{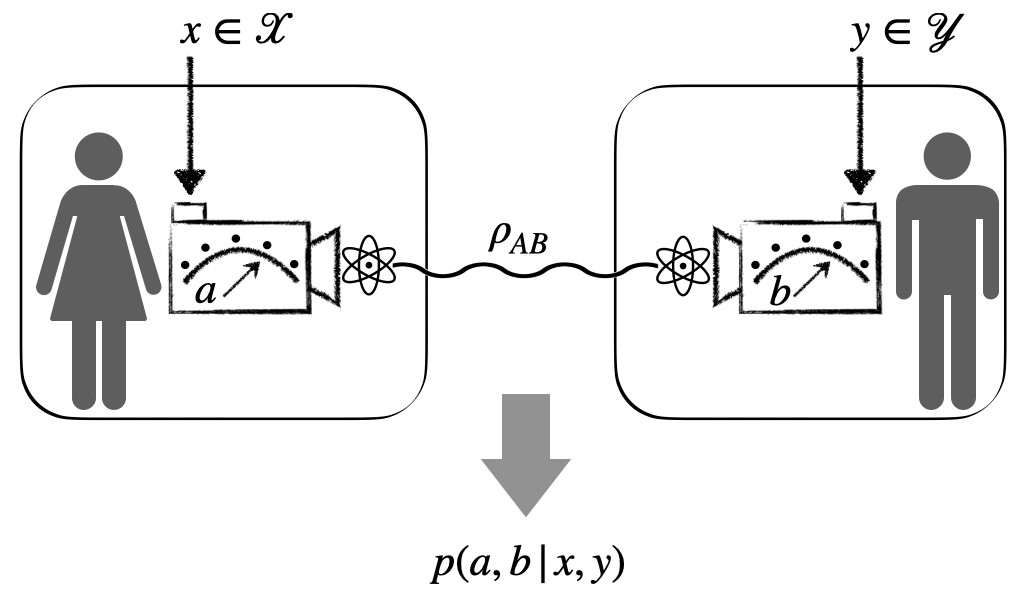}
\caption{A schematic setup for the certification of higher Schmidt number states in a standard Bell scenario with classical inputs and outputs.}\label{bell-nonlocal}
\vspace{-.5cm}
\end{figure}

To determine whether this condition holds, one can construct a linear inequality that must be satisfied by all correlations generated from states with Schmidt number $r$ or less, and which would be violated by the correlation obtained from $\rho_{AB}$ for a suitably chosen set of local measurements. The inequality is constructed as follows: In each run of the game, a \textit{payoff} function \( \mathscr{J}(a,b,x,y) \) is assigned based on the inputs \( (x,y) \) and outputs \( (a,b) \). The average payoff is then given by
\begin{align}
    \mathscr{J}_{\text{avg}} = \sum_{a,b,x,y} \mathscr{J}(a,b,x,y) \, p(a,b|x,y) \, p(x) \, q(y). \label{avgpayoff}
\end{align}
If $\mathscr{J}_{\text{avg}}$ is bounded for all correlations arising from states with $\text{SN} \leq r$, and the value obtained using $\rho_{AB}$ lies outside this bound, then this bound defines a linear inequality. For example, if the payoff function is chosen such that $\mathscr{J}_{\text{avg}} \geq 0$ for all states $\sigma_{AB}$ with $\text{SN}(\sigma_{AB}) \leq r$, then any observed violation of this inequality (i.e., $\mathscr{J}_{\text{avg}} < 0$) directly implies that $\text{SN}(\rho_{AB}) > r$.

Note that the conclusion about the Schmidt number follows solely from the input-output correlation statistics, without any assumptions about the internal workings of the certifying devices. Hence, the method is fully device-independent. Moreover, since the inequality cannot be violated by any state with Schmidt number $\leq r$, regardless of the choice of local measurements, this approach rules out the possibility of false positives due to device imperfections.  

A notable limitation of this method is that it cannot certify the Schmidt number of all entangled states. For instance, consider the two-qutrit isotropic states:
\begin{equation}
	\rho^{\text{iso}}_{AB}(\lambda) = \lambda |\Phi_3^{+}\rangle \langle \Phi_3^{+}| + (1-\lambda) \frac{\mathbb{I}_A}{3} \otimes \frac{\mathbb{I}_B}{3},\label{two-qutrit-iso}	
\end{equation}
where \( |\Phi_3^{+}\rangle = \frac{1}{\sqrt{3}} (\ket{00} + \ket{11} + \ket{22}) \) and \( \lambda \in [0,1] \). These states are entangled for \( \lambda > \frac{1}{4} \). However, for \( \lambda \le \frac{8}{27} \), the state admits a local hidden variable (LHV) model \cite{barrett2002nonsequential, augusiak2014local}, meaning that its correlations, for any local measurements, can be simulated by a separable state.

Consequently, in the range \( \frac{1}{4} < \lambda \le \frac{8}{27} \), although the states \( \rho^{\text{iso}}_{AB}(\lambda) \) are entangled, no Bell-nonlocal game can certify their entanglement. Since this framework cannot even establish that \( \text{SN}(\rho^{\text{iso}}_{AB}(\lambda)) > 1 \), it also fails to certify \( \text{SN}(\rho^{\text{iso}}_{AB}(\lambda)) > r \) for any \( r \ge 1 \). This limitation applies more broadly to all entangled states that do not exhibit Bell-nonlocality.

It is then natural to ask about the status of Bell-nonlocal states: are Bell-nonlocal games sufficient to certify the Schmidt number of all such states, for all values of \( r \)? In a restricted setting, we show that this is not the case.

\begin{theorem}\label{theo1}
    There exist quantum states that exhibit Bell-nonlocality, yet for which the fact that their Schmidt number exceeds $r$ (with $r \geq 2$) cannot be certified through any Bell-nonlocal game, if the parties are restricted to perform only local projective measurements.
\end{theorem}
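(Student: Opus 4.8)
The plan is to recast the non-certifiability condition as a statement about correlation sets, and then exhibit a single state whose projective-measurement behaviours are ``too weak'' to leave the Schmidt-number-$r$ region even though the state itself is genuinely of higher Schmidt number and Bell-nonlocal. For a fixed finite choice of local projective measurements on $\rho_{AB}$, the game certifies $\text{SN}(\rho_{AB})>r$ precisely when the produced behaviour $p(a,b|x,y)$ lies outside the set $\mathcal{C}_r$ of behaviours reproducible by states in $\mathcal{S}_r$ (with arbitrary local measurements on the simulating side). Since $\mathcal{S}_r$ is convex and compact and the behaviours depend linearly on the state, $\mathcal{C}_r$ is convex and compact; hence, by the separating-hyperplane (Hahn--Banach) theorem, some payoff $\mathscr{J}$ certifies $\rho_{AB}$ iff its projective behaviour leaves $\mathcal{C}_r$. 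Therefore \emph{no} Bell-nonlocal game certifies $\text{SN}(\rho_{AB})>r$ under projective measurements iff every projective behaviour of $\rho_{AB}$ lies in $\mathcal{C}_r$. The theorem thus reduces to constructing a state that is (i) Bell-nonlocal, (ii) of Schmidt number strictly greater than $r$, yet (iii) ``$\text{SN}\le r$-simulable'' in the sense that all of its projective behaviours belong to $\mathcal{C}_r$.

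For the construction I would work with a noisy, highly symmetric family --- for instance the $d$-dimensional isotropic (or Werner) states $\rho^{\text{iso}}_{d}(\lambda)$ --- and control three thresholds in the noise parameter $\lambda$: the threshold $\lambda_{\text{SN}}(r,d)$ below which the \emph{state} has $\text{SN}\le r$ (known in closed form from the Terhal--Horodecki fidelity bounds), the Bell-nonlocality threshold $\lambda_{\text{LHV}}(d)$ above which the projective behaviours already leave $\mathcal{C}_1$, and a simulability threshold $\lambda_{\text{sim}}(r,d)$ below which I can exhibit an explicit $\text{SN}\le r$ model reproducing \emph{all} projective behaviours. The target is to show that the window $\max\{\lambda_{\text{SN}}(r,d),\lambda_{\text{LHV}}(d)\}<\lambda\le\lambda_{\text{sim}}(r,d)$ is non-empty, so that any $\lambda$ in it gives a state satisfying (i)--(iii); for the representative case $r=2$ this yields the advertised family of $\mathbb{C}^{3}\otimes\mathbb{C}^{3}$ states with $\text{SN}=3$ whose projective statistics are reproduced by $\text{SN}\le2$ states.

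The crux --- and the step I expect to be hardest --- is item (iii): building the explicit $\text{SN}\le r$ simulation valid for \emph{every} pair of local projective measurements. Here I would lift the Werner--Barrett local-hidden-variable construction ``one Schmidt level up'': introduce a shared classical variable $\xi$ drawn from a rotationally invariant distribution, and for each $\xi$ let the parties act on a conditional resource of Schmidt rank at most $r$ (together with $\xi$-dependent response functions), so that the total simulating state $\sigma=\sum_\xi p_\xi\,\sigma_\xi$ is a convex mixture of $\text{SN}\le r$ states and hence itself has $\text{SN}\le r$. Reproducing $\Tr[(M^{a|x}_A\otimes N^{b|y}_B)\rho_{AB}]$ for all projective $M^{a|x}_A,N^{b|y}_B$ then reduces to matching a small number of rotationally invariant moments, which is what fixes $\lambda_{\text{sim}}(r,d)$; the restriction to projective measurements is essential precisely because, as in Werner's original model, the construction need not survive general POVMs on the parties' side. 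The remaining bookkeeping is to certify (i) by exhibiting one explicit choice of projective measurements violating a standard Bell inequality for $\lambda$ in the window, and (ii) by evaluating an optimal Schmidt-number witness (or the isotropic-state fidelity criterion) to confirm $\text{SN}>r$. The general $r\ge2$ statement then follows from the same three-threshold analysis in $\mathbb{C}^{r+1}\otimes\mathbb{C}^{r+1}$, with the $r=2$ qutrit construction serving as the concrete witness that the window is non-empty.
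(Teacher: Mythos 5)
Your reduction of non-certifiability to the statement that every projective behaviour of $\rho_{AB}$ lies in the set $\mathcal{C}_r$ of behaviours reproducible by Schmidt-number-$\le r$ resources is correct (and the ``no game certifies'' direction does not even need a separating-hyperplane argument --- it is immediate from the definition of a payoff bound). The gap is in the step you yourself flag as the crux, item (iii). You propose to realise all three conditions within the single one-parameter isotropic family by ``lifting the Werner--Barrett LHV construction one Schmidt level up,'' but no such construction is supplied, and for this family the requirements pull in opposite directions: Bell-nonlocality of $\rho^{\text{iso}}_d(\lambda)$ under projective measurements requires \emph{large} $\lambda$ (an explicit Bell violation, which for isotropic states needs visibilities well above the Barrett LHV threshold), while an $\mathrm{SN}\le r$ simulation of \emph{all} projective behaviours is only plausible at \emph{small} $\lambda$. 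There is no evidence that your window $\max\{\lambda_{\text{SN}},\lambda_{\text{LHV}}\}<\lambda\le\lambda_{\text{sim}}$ is non-empty, and producing a bounded-Schmidt-number quantum simulation model of a Bell-nonlocal isotropic state is a genuinely open problem, not a routine extension of Barrett's classical model. Your concrete claim for qutrits is also numerically wrong: for $d=3$ the isotropic state has $\mathrm{SN}\le 2$ for all $p\le 5/8$, whereas Barrett's LHV model only covers $p\le 5/12$, so there is no regime where a two-qutrit isotropic state simultaneously has Schmidt number $3$ and admits a local model --- let alone one where it is additionally Bell-nonlocal.

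The paper avoids this tension by decoupling the source of the nonlocality from the source of the excess Schmidt number. It works in $\mathbb{C}^8\otimes\mathbb{C}^8$, where the isotropic state $\rho^{\text{iso}}_{AB}(p=0.24)$ has Schmidt number $3$ (since $p>15/63\approx 0.238$) yet still admits an LHV model under projective measurements (since $p\le 0.245$), and then mixes it with the Schmidt-rank-$2$ pure state $\ket{\phi_2^+}$: $\tilde\rho_{AB}(\lambda)=\lambda\,\rho^{\text{iso}}_{AB}(0.24)+(1-\lambda)\proj{\phi_2^+}$. The witness $W^{\text{opt}}_2$ shows $\mathrm{SN}(\tilde\rho_{AB}(\lambda))=3$ for all $\lambda>0$, CHSH is violated for $\lambda<0.312$, and --- this is the step your proposal is missing --- the simulation of every projective behaviour is now trivial: with probability $\lambda$ the parties run the LHV model of the local component, and with probability $1-\lambda$ they measure the genuine two-qubit state $\ket{\phi_2^+}$, so the whole behaviour is reproduced by a Schmidt-number-$2$ resource plus shared randomness. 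If you want to salvage your approach, you should adopt this convex-mixture structure rather than a single isotropic family; as written, the proposal does not establish the theorem.
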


\begin{proof}
    The proof is constructive. Consider the following family of states on $\mathbb{C}^8 \otimes \mathbb{C}^8$:
    \begin{align}
        \tilde{\rho}_{AB}(\lambda) = \lambda~ \rho^{\text{iso}}_{AB}(p=0.24) + (1 - \lambda) \ket{\phi^{+}_{2}}_{AB} \bra{\phi^{+}_{2}}, \quad \lambda \in [0,1], \label{ex2}
    \end{align}
    where
    \begin{align*}
        &\rho^{\text{iso}}_{AB}(p) = p~\ket{\Phi_8^{+}}_{AB} \bra{\Phi_8^{+}} + (1-p)\frac{\mathbb{I}_A}{8} \otimes \frac{\mathbb{I}_B}{8}, \quad p \in [0,1], \\
        &\ket{\Phi_8^+}_{AB} = \frac{1}{\sqrt{8}} \sum_{i=0}^7 \ket{ii}_{AB}, \quad
        \ket{\phi_2^+}_{AB} = \frac{1}{\sqrt{2}} \sum_{i=0}^1 \ket{ii}_{AB}.
    \end{align*}

    The states $\tilde{\rho}_{AB}(\lambda)$ have the following properties:

    \begin{lemma}\label{lemma1}
        The Schmidt number of $\tilde{\rho}_{AB}(\lambda)$ is $2$ for $\lambda = 0$, and $3$ for all $\lambda > 0$.
    \end{lemma}

    \begin{lemma}\label{lemma2}
        $\tilde{\rho}_{AB}(\lambda)$ violates the CHSH Bell inequality \cite{clauser1969proposed} under local projective measurements for $0 \le \lambda < 0.312$.
    \end{lemma}

    \noindent Proofs of Lemmas \eqref{lemma1} and \eqref{lemma2} are provided in Appendix~\ref{Appendix A}. Lemma~\eqref{lemma1} establishes that $\tilde{\rho}_{AB}(\lambda)$ is entangled for all $\lambda > 0$, and Lemma~\eqref{lemma2} shows that it exhibits Bell-nonlocality in the interval $\lambda \in [0, 0.312)$ via local projective measurements.

    We now focus on the interval $\lambda \in (0, 0.312)$, where $\tilde{\rho}_{AB}(\lambda)$ has Schmidt number $3$ and exhibits Bell-nonlocality. We will prove that no Bell-nonlocal game restricted to local projective measurements can certify that $\tilde{\rho}_{AB}(\lambda) \notin \mathcal{S}_2$ in this regime.

    For this, we use the following lemma (proof in Appendix~\ref{Appendix A}):

    \begin{lemma}\label{lemma3}
        The state $\rho^{\text{iso}}_{AB}(p=0.24)$ admits a local hidden variable model under local projective measurements.
    \end{lemma}

    Since $\rho^{\text{iso}}_{AB}(p=0.24)$ is local, any nonlocality exhibited by $\tilde{\rho}_{AB}(\lambda)$ under local projective measurements must originate from the $\ket{\phi_2^+}_{AB}$ component. Thus, the resulting correlations can be simulated using a state of Schmidt number $2$ and shared classical randomness.

    More concretely, let $p(a,b|x,y)$ denote the correlation obtained from performing local projective measurements on $\tilde{\rho}_{AB}(\lambda)$. This can be written as:
    \begin{align}\label{SN2correlation}
        p(a,b|x,y) = \lambda~ p^{L}(a,b|x,y) + (1 - \lambda)~ p^{\phi^+_2}(a,b|x,y),
    \end{align}
    where $p^L(a,b|x,y)$ is the local correlation arising from the state $\rho^{\text{iso}}_{AB}(p=0.24)$, and $p^{\phi^+_2}(a,b|x,y)$ is generated by measuring the entangled two-qubit state $\ket{\phi^+_2}_{AB}$.

    A simulation strategy of $p(a,b|x,y)$ using a state of Schmidt number-$2$ proceeds as follows: with probability $(1 - \lambda)$, the parties share $\ket{\phi^+_2}_{AB}$ and perform the relevant projective measurements to obtain $p^{\phi^+_2}(a,b|x,y)$; with probability $\lambda$, they use shared classical randomness---the LHV model of $\rho^{\text{iso}}_{AB}(p=0.24)$---to simulate $p^L(a,b|x,y)$. The resulting mixture reproduces the overall correlation in Eqn.~\eqref{SN2correlation}.

    Hence, any correlation arising from local projective measurements on $\tilde{\rho}_{AB}(\lambda)$ can be simulated by a state with Schmidt number at most $2$. Consequently, no Bell-nonlocal game limited to projective measurements can certify that $\text{SN}(\tilde{\rho}_{AB}(\lambda)) > 2$ for $\lambda \in (0, 0.312)$. This completes the proof.
\end{proof}

While Theorem~\ref{theo1} is stated only for local projective measurements, it admits an extension to general positive-operator-valued measurements (POVM) in a specific class of Bell-nonlocal games. In particular, since any two-outcome (dichotomic) POVM can be expressed as a convex combination of projective measurements \cite{PhysRevLett.97.050503}, we obtain the following corollary. 
\begin{corollary}\label{cor1}
There exist quantum states that are Bell-nonlocal, but for which the fact that their Schmidt number exceeds $r$ (with $r \geq 2$) cannot be certified in any two-output Bell-nonlocal game, even when the parties are allowed to perform arbitrary POVMs.
\end{corollary}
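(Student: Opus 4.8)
The plan is to reuse the very same family of states $\tilde{\rho}_{AB}(\lambda)$ with $\lambda \in (0,0.312)$ constructed in the proof of Theorem~\ref{theo1}, and to upgrade the simulation argument from projective measurements to arbitrary two-outcome POVMs. The starting point is the cited fact \cite{PhysRevLett.97.050503} that any dichotomic POVM is a convex combination of projective measurements: for Alice's measurement $\{M^{0|x}_A, M^{1|x}_A\}$ we may write $M^{a|x}_A = \sum_\mu q_x(\mu)\,\Pi^{a|x,\mu}_A$, where each $\{\Pi^{0|x,\mu}_A, \Pi^{1|x,\mu}_A\}$ is a genuine projective measurement and $q_x(\cdot)$ is a probability distribution; analogously $N^{b|y}_B = \sum_\nu s_y(\nu)\,\Sigma^{b|y,\nu}_B$ for Bob. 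The crucial structural feature is that these decompositions are \emph{local}: Alice's mixing weight $q_x(\mu)$ depends only on her input and a private randomness index, and likewise for Bob.

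First I would substitute these decompositions into the correlation and use linearity of the trace to write $p(a,b|x,y)$ as a convex mixture $\sum_{\mu,\nu} q_x(\mu) s_y(\nu)\,\Tr[(\Pi^{a|x,\mu}_A \otimes \Sigma^{b|y,\nu}_B)\,\tilde{\rho}_{AB}(\lambda)]$ of projective correlations on the same state. Next I would split $\tilde{\rho}_{AB}(\lambda)$ into its two defining components. The $\ket{\phi_2^+}_{AB}$ part poses no difficulty: since it is a genuine Schmidt-number-$2$ state, its contribution is reproduced by applying the same POVMs to $\ket{\phi_2^+}_{AB}$, which is permitted in an $\mathcal{S}_2$ strategy. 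The entire burden therefore falls on the isotropic component $\rho^{\text{iso}}_{AB}(p=0.24)$.

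The key step is to show that $\rho^{\text{iso}}_{AB}(p=0.24)$ stays local even under two-outcome POVMs, although Lemma~\ref{lemma3} only guarantees an LHV model for \emph{projective} measurements. Here I would exploit the factorized form of the decomposition weights. Writing the projective LHV model of Lemma~\ref{lemma3} as $\int \mathrm{d}\omega\,\pi(\omega)\,P_A(a|x,\mu,\omega)\,P_B(b|y,\nu,\omega)$ for the measurement pair indexed by $(x,\mu)$ and $(y,\nu)$, the induced POVM correlation on $\rho^{\text{iso}}_{AB}$ becomes $\int \mathrm{d}\omega\,\pi(\omega)\,[\sum_\mu q_x(\mu) P_A(a|x,\mu,\omega)]\,[\sum_\nu s_y(\nu) P_B(b|y,\nu,\omega)]$. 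Because the $\mu$- and $\nu$-sums factorize across the two parties, the bracketed quantities are themselves valid local response functions---each party merely post-processes the shared variable $\omega$ together with its own private index. Hence the isotropic contribution is still LHV, i.e.\ reproducible by shared classical randomness alone. Combining the two components, the full POVM correlation on $\tilde{\rho}_{AB}(\lambda)$ is a mixture of an LHV part (weight $\lambda$) and POVMs on $\ket{\phi_2^+}_{AB}$ (weight $1-\lambda$), and is therefore simulable by a state in $\mathcal{S}_2$.

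I expect the main obstacle to be precisely this last reduction: verifying that a convex mixture of LHV correlations whose weights depend on the inputs in a \emph{product} fashion is again an LHV correlation. The subtlety is that a generic input-dependent mixture of local correlations need not be local, so one must use that the two-outcome restriction yields weights $q_x(\mu)\,s_y(\nu)$ that factorize between Alice and Bob, allowing each party to absorb its mixing index into its local response function. Once this is established, the extension of Theorem~\ref{theo1} to arbitrary POVMs in the two-output setting---and hence to any $r \geq 2$ via the same Schmidt-number-$3$-versus-$2$ construction---follows immediately.
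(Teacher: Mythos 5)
Your proposal is correct and takes essentially the same route as the paper, which justifies the corollary in one line by the same fact that dichotomic POVMs are convex combinations of projective measurements and lets the product form of the local mixing weights be absorbed into each party's response function. You simply spell out the details (correctly) that the paper leaves implicit.
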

Here, by ``two-output Bell-nonlocal games" we mean scenarios in which each party implements dichotomic measurements only.

The limitations outlined above show that Bell-nonlocal games, while powerful, are fundamentally restricted in their ability to certify the Schmidt number of all entangled---or even Bell-nonlocal---states, particularly when only local projective measurements are allowed. To overcome these limitations, we now consider a scenario where the requirement of full device-independence is relaxed.

\section{Measurement-Device-Independent Schmidt Number Certification with Trusted Quantum Inputs}\label{MDI}

In this section, we demonstrate that the Schmidt number of any quantum state can be certified in an MDI manner using semiquantum nonlocal games \cite{buscemi2012all}. In this framework, instead of receiving classical inputs $x$ and $y$, Alice and Bob are provided with quantum states $\psi^x_{A_0}$ and $\phi^y_{B_0}$, respectively (see Fig. \ref{sq-nonlocal}). Notably, the standard Bell-nonlocal game is recovered as a special case of this framework when the sets $\{\psi^x_{A_0}\}_x$ and $\{\phi^y_{B_0}\}_y$ consist of mutually orthogonal states. 
In the general setting, while Alice and Bob know the ensembles from which their input states are drawn, they do not know which specific state is received in each round. Upon receiving their respective states, Alice and Bob each perform a joint measurement---$\{M^a_{AA_0}\}$ and $\{N^b_{BB_0}\}$, respectively---on their input state and their half of the shared state $\rho_{AB}$, yielding outputs $a$ and $b$. The lack of knowledge about the input states implies that their measurement strategies cannot depend on the individual inputs. The correlation generated by repeating this game multiple times is given by
\begin{align}
    p(a, b|\psi^{x}, \phi^{y})=\Tr[(M_{A A_{0}}^{a} \otimes N_{B B_{0}}^{b}) (\psi^{x}_{A_{0}} \otimes \rho_{AB} \otimes \phi^{y}_{B_{0}}) ]. \label{prob}
\end{align}
As in the Bell scenario, Alice and Bob can then infer whether $\text{SN}(\rho_{AB})>r$ by evaluating a suitably constructed average payoff [Eqn.~\eqref{avgpayoff}], where the correlation $p(a,b|x,y)$ is now replaced by $p(a,b|\psi^x,\phi^y)$.

It is important to emphasize that this certification method relies critically on the assumption that the input quantum states are trusted---that is, the method is valid provided Alice and Bob receive the intended states $\{\psi^x_{A_0}\}$ and $\{\phi^y_{B_0}\}$ with no imperfections or deviations. As a result, this approach is not fully device-independent. Nevertheless, it remains measurement-device-independent, as it does not rely on any assumptions about the internal workings or reliability of Alice and Bob’s measurement devices. Consequently, the method is robust against imperfections in the local measurements.

\begin{figure}[t!]
\centering
\includegraphics[scale=0.6]{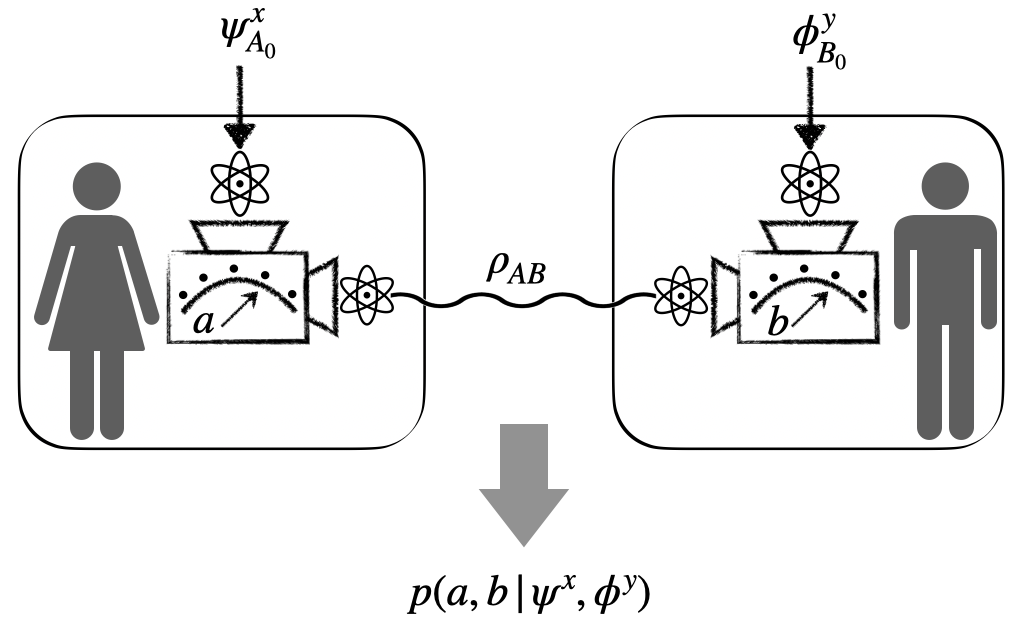}
\caption{A schematic setup for the certification of states with a higher Schmidt number in a measurement-device-independent framework using trusted quantum inputs and classical outputs. }\label{sq-nonlocal}
\vspace{-.5cm}
\end{figure}

We now prove that this MDI method can be used to certify the Schmidt number of all entangled states.
\begin{theorem}\label{theo2}
    For every bipartite quantum state $\rho_{AB}$ with $\text{SN}(\rho_{AB})>r$, there exists a semiquantum nonlocal game such that the average payoff for the game obtained using $\rho_{AB}$, $\mathscr{J}_{\text{avg}}(\rho_{AB})$ is strictly negative, whereas $\mathscr{J}_{\text{avg}}(\sigma_{AB}) \ge 0,~\forall\sigma_{AB} \in \mathcal{S}_{r}$.    
\end{theorem}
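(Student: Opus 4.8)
The plan is to reduce the statement to the existence of a Schmidt-number witness and then promote that witness into a semiquantum game, following the measurement-device-independent witness construction of Branciard \textit{et al.}~\cite{branciard2013measurement}. Since $\text{SN}(\rho_{AB})>r$ is equivalent to $\rho_{AB}\notin\mathcal{S}_r$, the Hahn--Banach separation already recorded in Eq.~\eqref{witness} furnishes a Schmidt-number witness $W_r$ with $\Tr[W_r\rho_{AB}]<0$ and $\Tr[W_r\sigma_{AB}]\ge 0$ for every $\sigma_{AB}\in\mathcal{S}_r$. The entire game will be engineered so that its average payoff equals $\Tr[W_r\,\cdot\,]$ evaluated on a suitable effective operator.

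To build the game I would take the trusted input registers to satisfy $A_0\cong A$ and $B_0\cong B$, and choose tomographically complete ensembles $\{\psi^x_{A_0}\}$ and $\{\phi^y_{B_0}\}$ spanning the respective Hermitian operator spaces, together with full-support distributions $p(x),q(y)$. Because transposition is a linear bijection on operators, the products $\{(\psi^x)^{T}\otimes(\phi^y)^{T}\}$ still span, so $W_r$ admits a real expansion $W_r=\sum_{x,y}\beta_{x,y}\,(\psi^x)^{T}\otimes(\phi^y)^{T}$. I would single out one outcome $\star$ for each party and set the payoff to $\mathscr{J}(a,b,x,y)=\beta_{x,y}\,\delta_{a,\star}\,\delta_{b,\star}/[p(x)q(y)]$, with the honest strategy being the projections onto the maximally entangled states $\ket{\Phi^+}_{AA_0}$ and $\ket{\Phi^+}_{BB_0}$ assigned to $\star$. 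The negativity for $\rho_{AB}$ is then immediate: the transpose-trick identity $\mathrm{Tr}_{A_0}\!\big[\proj{\Phi^+}_{AA_0}(\psi^x_{A_0}\otimes\mathbb{I}_A)\big]=\tfrac{1}{d_A}(\psi^x)^{T}$ (and its analogue on $B$) collapses the honest-strategy payoff to $\mathscr{J}_{\text{avg}}(\rho_{AB})=\tfrac{1}{d_Ad_B}\Tr[W_r\rho_{AB}]<0$, establishing the first half of the claim.

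The crux is the measurement-device-independent bound $\mathscr{J}_{\text{avg}}(\sigma_{AB})\ge 0$ for all $\sigma_{AB}\in\mathcal{S}_r$, which must hold for \emph{arbitrary}, possibly adversarial, measurements rather than just the honest one. For general POVM elements $M^{\star}_{AA_0}\succeq 0$ and $N^{\star}_{BB_0}\succeq 0$, tracing out the input registers writes the payoff as $\mathscr{J}_{\text{avg}}(\sigma_{AB})=\Tr[V_{AB}\,\sigma_{AB}]$ with $V_{AB}=\sum_{x,y}\beta_{x,y}\,\tilde\Lambda_A(\psi^x)\otimes\tilde\Lambda_B(\phi^y)$ and $\tilde\Lambda_A(\cdot)=\mathrm{Tr}_{A_0}[M^{\star}_{AA_0}(\,\cdot\,\otimes\mathbb{I}_A)]$ (similarly for $B$). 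The key step I would carry out is to recognise, through the Choi--Jamiolkowski correspondence, that the adjoint maps $\tilde\Lambda_A^{*},\tilde\Lambda_B^{*}$ are completely positive up to a transpose, so that $\mathscr{J}_{\text{avg}}(\sigma_{AB})=\Tr[W_r\,\tau]$ for an effective positive operator $\tau$ obtained from $\sigma_{AB}$ by local completely positive maps and full transpositions. Since $\mathcal{S}_r$ is invariant under transposition and is not enlarged by local completely positive maps (local Kraus operators cannot raise the Schmidt rank of any pure component), $\sigma_{AB}\in\mathcal{S}_r$ forces $\tau\in\mathcal{S}_r$, and the defining property of the witness gives $\Tr[W_r\tau]\ge 0$.

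I expect this soundness direction to be the main obstacle. Because the measurement devices are untrusted I cannot invoke the Bell-measurement structure, and the argument must instead rest solely on positivity of the POVM elements, the transpose-invariance and local-operation monotonicity of the Schmidt-number cone, and careful bookkeeping of the transposes introduced by the Choi isomorphism. The final technical point, extending the witness inequality from normalised states to the positive operators $\tau$ that arise, is harmless, since $W_r$ is non-negative on the entire cone of operators with Schmidt number at most $r$.
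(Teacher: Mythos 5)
Your proposal is correct and follows essentially the same route as the paper: a Schmidt-number witness from Hahn--Banach, expanded over (transposed) product input states, a payoff concentrated on a single outcome pair, maximally entangled projections for the honest strategy, and soundness from the fact that untrusted POVM elements act as local completely positive (filtering) maps composed with transposition, under which $\mathcal{S}_r$ is closed. The only cosmetic difference is that you pull the measurement maps back onto the witness via their adjoints and land the effective operator $\tau$ on $AB$, whereas the paper keeps the filtered operators $R^{(k)}_{A_0B_0}$ on the input registers --- these are dual formulations of the identical argument.
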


\begin{proof}
    We begin by noting that for every bipartite state $\rho_{AB}$ with $\text{SN}(\rho_{AB})>r$, there exists a Schmidt number witness operator $W_r$, such that Eqn. \eqref{witness} holds. Since $W_r$ is a Hermitian operator and the vector space of Hermitian operators can be spanned by the set of density operators, $W_r$ can be expressed (non-uniquely) as
    \begin{equation}
        W_{r} = \sum_{x, y} \gamma_{x,y}~(\xi^{x} \otimes \zeta^{y}), \label{witness-decomposed}
    \end{equation}
    where the coefficients $\gamma_{x,y}$ are real-valued for all $x, y$, and $\{\xi^{x}\}$ and $\{\zeta^{y}\}$ are density operators on $\mathcal{H}_A$ and $\mathcal{H}_B$, respectively. In general, these can be chosen to be informationally complete sets on the local Hilbert spaces.

    Based on this witness, we construct a semiquantum nonlocal game with the following specifications:
    \begin{itemize}
        \item The input states are given by $\psi_{A_0}^x = (\xi_{A_0}^x)^\intercal$ and $\phi_{B_0}^y = (\zeta_{B_0}^y)^\intercal$, with $\mathcal{H}_{A_0(B_0)} \cong \mathcal{H}_{A(B)}$;
        \item The outputs $a$ and $b$ each take values in $\{1, 2\}$;
        \item The payoff function $\mathscr{J}(a, b, x, y)$ and the probability distributions $p(x)$ and $q(y)$ satisfy
    \begin{equation}
    p(x)~q(y)~\mathscr{J}(a,b,x,y) = \gamma_{a,b,x,y} = \left\{
        \begin{array}{ll}
            \gamma_{x,y} & \text{if } (a,b) = (1,1), \\
            0 & \text{otherwise}.
        \end{array}
    \right.  \label{weightage}
    \end{equation}
    \end{itemize}
    The average payoff of this game is then given by
    \begin{align}
        \mathscr{J}_{\text{avg}}(\rho_{AB})=\sum_{x,y} \gamma_{x,y} \Tr[(M_{A_{0}A}^{a=1} \otimes N_{B_{0}B}^{b=1})((\xi_{A_{0}}^{x})^\intercal\otimes \rho_{AB} \otimes (\zeta_{B_{0}}^{y})^\intercal)]. \label{payoff3}
    \end{align}

    This game certifies that $\text{SN}(\rho_{AB}) > r$ if the following conditions are satisfied:
    \begin{enumerate}[$(i)$]
        \item For all measurement settings chosen by Alice and Bob, \(\mathscr{J}_{\text{avg}} (\rho_{AB}) \ge 0\) whenever \(\rho_{AB} \in \mathcal{S}_r\);
        \item If \(\rho_{AB} \notin \mathcal{S}_r\), then there exists at least one measurement setting for which \(\mathscr{J}_{\text{avg}} (\rho_{AB}) < 0\).\\
    \end{enumerate}

\textbf{Proof of the first condition:} 
Let $\rho_{AB} \in \mathcal{S}_r$. Then there exists a decomposition $\rho_{AB} = \sum_{k} p_k \ket{\eta_k}_{AB} \bra{\eta_k}$ such that $\text{SR}(\ket{\eta_k}) \le r$ for all $k$. Substituting this into Eqn.~\eqref{payoff3}, we get 
\begin{align}\label{pay}
    \mathscr{J}_{\text{avg}} (\rho_{AB})&= \sum_k p_k\sum_{x,y}\gamma_{x,y}\Tr[(M_{A_{0}A}^{a=1} \otimes N_{B_{0}B}^{b=1})\cdot((\xi_{A_{0}}^{x})^\intercal\otimes \ket{\eta_k}_{AB}\bra{\eta_k} \otimes (\zeta_{B_{0}}^{y})^\intercal)]\nonumber\\
    &=\sum_{k}p_{k}\sum_{x,y} \gamma_{x,y} \Tr_{A_{0}B_{0}} [R_{A_{0}B_{0}}^{(k)}((\xi_{A_{0}}^{x})^\intercal \otimes (\zeta_{B_{0}}^{y})^\intercal)],
\end{align} 
where
\begin{align}
    R_{A_0 B_0}^{(k)}= \Tr_{AB}[{(M_{A_{0}A}^{a=1}\otimes N_{B_{0}B}^{b=1})(\ket{\eta_{k}}_{AB}\bra{\eta_{k}}\otimes \mathbb{I}_{A_{0}B_{0}}})].
\end{align}

Note that $R_{A_0 B_0}^{(k)}$ is a positive operator proportional to the state obtained by performing the local measurements $\{M_{A A_0}^a\}$ and $\{N_{B B_0}^b\}$ on the joint state $\left(\ket{\eta_k}_{AB} \bra{\eta_k} \otimes \frac{\mathbb{I}_{A_0 B_0}}{d_{A_0} d_{B_0}}\right)$, and post-selecting on outcome $(a, b) = (1, 1)$. Since $\text{SR}(\ket{\eta_k}) \le r$, we have $\text{SN}(\ket{\eta_k}\bra{\eta_k} \otimes \mathbb{I}) \le r$ \cite{terhal2000schmidt}. As Schmidt number is monotonic under local filtering (a subset of SLOCC operations) \cite{ishizaka2004bound,10.5555/2011706.2011707,dur2000three,eltschka2014quantifying,schmid2008discriminating}, it follows that $\text{SN}(R_{A_0 B_0}^{(k)}) \le r$ for all $k$. By convexity of $\mathcal{S}_r$ \cite{terhal2000schmidt}, it follows that $\sum_k p_k R_{A_0 B_0}^{(k)} \in \mathcal{S}_r$.

Using Eqn.~\eqref{witness-decomposed} in Eqn.~\eqref{pay}, the average payoff becomes
\begin{align}
    \mathscr{J}_{\text{avg}} (\rho_{AB}) &= \Tr_{A_0 B_0}\left[\sum_k p_k R_{A_0 B_0}^{(k)} W_r^\intercal \right] \nonumber \\
    &= \Tr_{A_0 B_0} \left[\sum_k p_k (R_{A_0 B_0}^{(k)})^\intercal W_r\right] \ge 0. \label{finalpayoff}
\end{align}
In the last step, we used the fact that Schmidt number does not increase under transposition, i.e., $\sum_k p_k (R_{A_0 B_0}^{(k)})^\intercal \in \mathcal{S}_r$.

\textbf{Proof of the second condition:} 
Now consider the case $\rho_{AB} \notin \mathcal{S}_r$, i.e., $\text{SN}(\rho_{AB}) > r$. Let Alice and Bob's local measurements be
\[
\{M_{A A_0}^{a=1} = \mathcal{P}_{A A_0},\; M_{A A_0}^{a=2} = \mathbb{I} - \mathcal{P}_{A A_0}\} \quad \text{and} \quad \{N_{B B_0}^{b=1} = \mathcal{P}_{B B_0},\; N_{B B_0}^{b=2} = \mathbb{I} - \mathcal{P}_{B B_0}\},
\]
respectively, where $\mathcal{P}_{X X_0}$ denotes the projector onto the maximally entangled state $\ket{\Phi^+}_{X X_0} = \frac{1}{\sqrt{d_X}} \sum_{i=0}^{d_X - 1} \ket{ii}$, for $X = A, B$.

For this choice of measurements, the average payoff becomes
\begin{align}
    \mathscr{J}_{\text{avg}}(\rho_{AB}) &= \sum_{x,y} \gamma_{x,y} \Tr[(\mathcal{P}_{A A_0} \otimes \mathcal{P}_{B B_0}) ((\xi_{A_0}^x)^\intercal \otimes \rho_{AB} \otimes (\zeta_{B_0}^y)^\intercal)] \nonumber \\
    &= \frac{1}{d_A d_B} \sum_{x,y} \gamma_{x,y} \Tr_{AB}[(\xi^x_A \otimes \zeta^y_B) \rho_{AB}] \nonumber \\
    &= \frac{1}{d_A d_B} \Tr_{AB}[W_r \rho_{AB}] < 0.
\end{align}
Here, we used the identity $\Tr_{X_0}[\mathcal{P}_{X X_0} (\mathbb{I}_X \otimes C_{X_0}^\intercal)] = \frac{1}{d_X}C_X$.

Therefore, for every state $\rho_{AB}$ with $\text{SN}(\rho_{AB}) > r$, we can construct a semiquantum nonlocal game that certifies its Schmidt number for all values of $r$. This completes the proof.
\end{proof}

In what follows, we present an explicit construction of a semiquantum nonlocal game based on a standard Schmidt number witness operator.

\subsection{Construction of semiquantum nonlocal game from an Optimal Schmidt Number Witness}\label{constructionoptimal}

In general, a single witness operator \( W_r \) cannot detect all states outside the set \( \mathcal{S}_r \) using only one copy of the state---that is, there exists no \emph{universal} Schmidt number witness. However, one can define the notion of an \emph{optimal} Schmidt number witness operator \cite{sanpera2001schmidt}. Given two $r$-Schmidt number witness operators \( W_r \) and \( \tilde{W}_r \), we say that \( W_r \) is \emph{finer} than \( \tilde{W}_r \) if it detects a strictly larger set of states. A witness \( W_r \) is called \emph{optimal} if there exists no other witness that is finer than \( W_r \).

An example of an optimal $r$-Schmidt number witness on \( \mathbb{C}^d \otimes \mathbb{C}^d \) is 
\begin{align}
    W^{\text{opt}}_r = \mathbb{I}_d \otimes \mathbb{I}_d - \frac{d}{r} \mathcal{P}_d,
\end{align}
where \( \mathcal{P}_d \) is the projector onto the $d$-dimensional maximally entangled state $\ket{\Phi_d^+} = \frac{1}{\sqrt{d}} \sum_{i=0}^{d-1} \ket{ii}$.

We now provide an explicit construction of the semiquantum nonlocal game corresponding to this optimal witness in the case \( d = 3 \), following the method outlined in the proof of Theorem~\ref{theo2}. Note that for this dimension, the Schmidt number witness with \( r=2 \) represents the only nontrivial witness beyond the standard entanglement witness. The corresponding witness takes the form
\begin{equation}
    W^{\text{opt}}_2 = \mathbb{I}_3 \otimes \mathbb{I}_3 - \frac{3}{2} \mathcal{P}_3. \label{opt3}
\end{equation}

Using this witness operator, we can detect, for example, states with Schmidt number exceeding $2$ from the family of two-qutrit isotropic states in Eqn.~\eqref{two-qutrit-iso}. It can be verified that \( \rho^{\text{iso}}_{AB}(\lambda) \) has Schmidt number greater than $2$ for \( \lambda \in \left( \frac{5}{8}, 1 \right] \).

The witness $W^{\text{opt}}_2$ can be decomposed as $$W^{\text{opt}}_2 = \sum_{x,y=1}^9\gamma_{x, y}~(\zeta^{x}\otimes\zeta^{y}),$$ 
where the coefficients \(\gamma_{x,y}\) are represented in the form of a symmetric matrix given by:
\begin{align}
    \{\gamma_{x,y}\}_{x,y=1}^9 \equiv\begin{pmatrix}
        \frac{1}{2} & 1 & 1 & \frac{1}{4} & \frac{1}{4} & 0 & -\frac{1}{4} & -\frac{1}{4} & 0 \\
        1 & \frac{1}{2} & 1 & \frac{1}{4} & 0 & \frac{1}{4} & -\frac{1}{4} & 0 & -\frac{1}{4}\\
        1 & 1 & \frac{1}{2} & 0 & \frac{1}{4} & \frac{1}{4} & 0 & -\frac{1}{4} & -\frac{1}{4}\\
        \frac{1}{4} & \frac{1}{4} & 0 & -\frac{1}{4} & 0 & 0 & 0 & 0 & 0\\
        \frac{1}{4} & 0 & \frac{1}{4} & 0 & -\frac{1}{4} & 0 & 0 & 0 & 0\\
        0 & \frac{1}{4} & \frac{1}{4} & 0 & 0 & -\frac{1}{4} & 0 & 0 & 0\\
        -\frac{1}{4} & -\frac{1}{4} & 0 & 0 & 0 & 0 & \frac{1}{4} & 0 & 0\\
        -\frac{1}{4} & 0 & -\frac{1}{4} & 0 & 0 & 0 & 0 & \frac{1}{4} & 0\\
        0 & -\frac{1}{4} & -\frac{1}{4} & 0 & 0 & 0 & 0 & 0 & \frac{1}{4}
    \end{pmatrix}
\end{align}

and $\zeta^{x(y)}=\ketbra{\zeta^{x(y)}}{\zeta^{x(y)}}$'s are given by
\begin{align*}
    \begin{array}{ll}
            \ket{\zeta^1}= \ket{0}, \quad & \quad \ket{\zeta^{6}} = \frac{1}{\sqrt{2}}(\ket{1}+\ket{2}), \\
            \ket{\zeta^{2}} = \ket{1}, \quad & \quad \ket{\zeta^{7}} = \frac{1}{\sqrt{2}}(\ket{0} + i\ket{1}), \\
            \ket{\zeta^{3}} = \ket{2}, \quad & \quad \ket{\zeta^{8}} =  \frac{1}{\sqrt{2}}(\ket{0} + i\ket{2}), \\
            \ket{\zeta^{4}} = \frac{1}{\sqrt{2}}(\ket{0}+\ket{1}), \quad & \quad \ket{\zeta^{9}} =  \frac{1}{\sqrt{2}}(\ket{1} + i\ket{2}). \\
            \ket{\zeta^{5}} = \frac{1}{\sqrt{2}}(\ket{0}+\ket{2}),
        \end{array}
\end{align*}

Following the construction in Theorem~\ref{theo2}, Alice and Bob can certify that the two-qutrit state $\rho^{\text{iso}}_{AB}(\lambda) \notin \mathcal{S}_2$ for $\lambda \in (\frac{5}{8}, 1]$ using a two-output semiquantum nonlocal game with
\begin{itemize}
    \item uniformly chosen quantum inputs $\psi^x_{A_0} = (\zeta_{A_0}^x)^\intercal$ and $\phi^y_{B_0} = (\zeta_{B_0}^y)^\intercal$, and
    \item a payoff function $\mathscr{J}(a,b,x,y) = \left\{
        \begin{array}{ll}
            81\gamma_{x,y} & \text{if } a=b=1, \\
            0 & \text{otherwise}.
        \end{array}
    \right.$
\end{itemize}

\section{Conclusion}\label{conclusion}

In this work, we have addressed the problem of certifying whether the Schmidt number of a bipartite quantum state exceeds a given threshold $r$, focusing on both fully and partially device-independent methods. We first examined the fundamental limitations of fully device-independent certification using Bell-nonlocal games. While this approach offers strong robustness against device imperfections, we showed that it cannot certify the Schmidt number of all entangled states. Interestingly, this limitation arises not only from the existence of entangled states that do not exhibit Bell-nonlocality, but also extends to certain states which are nonlocal yet whose Schmidt number cannot be certified by any Bell-nonlocal game when the parties are restricted to local projective measurements. That is, the correlations generated by these states by local projective measurements in Bell-nonlocal games admit a low-entanglement-dimension simulation model. Furthermore, this can be extended to general POVMs in any two-output Bell-nonlocal games.

However, to fully understand the scope of this limitation, it is important to investigate whether our no-go result persists for general POVMs beyond two-output nonlocal settings. We leave this as an open question for future work. Should a positive answer be obtained---i.e., if the result holds under general POVMs---it would be interesting to explore whether, for all $2\leq r\leq d$, there always exists a state such that Bell-nonlocal games can certify that its Schmidt number exceeds $(r-1)$ but cannot certify that it exceeds $r$.

Next, we have presented a measurement-device-independent (MDI) method for Schmidt number certification using the framework of semiquantum nonlocal games, where the measurement devices are untrusted but the state preparation devices are assumed to be trusted. We proved that for any bipartite state on $\mathbb{C}^d\otimes \mathbb{C}^d$, there exists a semiquantum nonlocal game that can certify that the Schmidt number of the state exceeds $r$, for all $1\leq r\leq d$. Furthermore, as an illustration, we provided an explicit construction of such games based on optimal Schmidt number witness operators.

Our results demonstrate that not only the presence of entanglement, but also its dimensionality, can be certified in an MDI framework using semiquantum nonlocal games. This highlights the power of semiquantum nonlocal games to reveal fine-grained entanglement structure in scenarios where fully device-independent methods fall short. In a separate forthcoming work, we extend these results by developing an alternative device-independent approach for Schmidt number certification. However, it remains an open question whether similar certification methods can be devised to capture other types of entanglement hierarchies defined by different entanglement measures.

Finally, motivated by previous experimental demonstrations of MDI entanglement certification \cite{verbanis2016resource, kocsis2015experimental, xu2014implementation}, we emphasize that our proposed method is well-suited for experimental realization using current technologies [see also \cite{Chang2021}]. This opens the door to practical and robust certification of high-dimensional entanglement.

\begin{acknowledgements}
    We thank Sahil Gopalkrishna Naik, Ram Krishna Patra, Ananda G. Maity, Nirman Ganguly, and Manik Banik for helpful discussions and suggestions. We thank Subhendu B. Ghosh for pointing out Corollary \ref{cor1} to us. B.M. acknowledges the DST INSPIRE fellowship program for financial support.
\end{acknowledgements}

\bibliography{main}

\newpage
\appendix

\section{Proof of Lemmas}\label{Appendix A}

Consider the isotropic class of states $\rho^{\text{iso}}_{AB}(p)$ on $\mathbb{C}^d\otimes\mathbb{C}^d$ 
\begin{equation}
    \rho^{\text{iso}}_{AB}(p)= p |\Phi_d^{+}\rangle_{AB} \langle \Phi_d^{+}| + (1-p) \frac{\mathbb{I}_A}{d} \otimes \frac{\mathbb{I}_B}{d} \label{isotropicstate}
\end{equation}
where $\ket{\Phi_d^{+}}_{AB}=\frac{1}{\sqrt{d}}\sum_{i=0}^{d-1} \ket{ii}_{AB}$ and $p\in[0,1]$. This class of states has the following properties:
\begin{enumerate}
    \item \label{SNisotropic} SN$(\rho^{\text{iso}}_{AB}(p))\leq r$ if and only if $0 \le p \le \frac{rd-1}{d^2-1}$ \cite{terhal2000schmidt, mallick2024characterization}. This implies that $\rho^{\text{iso}}_{AB}(p)$ is entangled for $\frac{1}{d+1} < p \le 1$.
    \item \label{localmodel} $\rho^{\text{iso}}_{AB}(p)$ admits a LHV model for $p \le \frac{\sum_{k=2}^{d} \frac{1}{k}}{d-1}$ under local projective measurements \cite{augusiak2014local}.
\end{enumerate}

\noindent \textbf{Proof of Lemma \eqref{lemma3}:} For the state $\rho^{\text{iso}}_{AB}(p)$ on $\mathbb{C}^8\otimes\mathbb{C}^8$, using property \eqref{localmodel}, we get that the state has an LHV model under local projective measurements for $p\leq 0.245$. Therefore, $\rho^{\text{iso}}_{AB}(p=0.24)$ admits an LHV model under local projective measurements. \qed

Furthermore, from properties \eqref{SNisotropic} and \eqref{localmodel}, it follows that for $0.238 < p \le 0.245$, $\rho^{\text{iso}}_{AB}(p)$ on $\mathbb{C}^8\otimes\mathbb{C}^8$ has a Schmidt number $3$ as well as admits an LHV model under local projective measurements.\\

Now, consider the family of states on \( \mathbb{C}^8 \otimes \mathbb{C}^8 \) given by Eqn. \eqref{ex2}.

\noindent \textbf{Proof of Lemma \eqref{lemma1}:} Since \(\text{SR}(\ket{\phi^{+}_{2}}_{AB}) = 2\) and \(\text{SN}(\rho^{\text{iso}}_{AB}(p=0.24)) = 3\), it follows from the convexity of the set \(\mathcal{S}_3\) that \(\tilde{\rho}_{AB}(\lambda) \in \mathcal{S}_3\) \cite{terhal2000schmidt}. Now, consider the optimal Schmidt number witness operator on $\mathbb{C}^8\otimes\mathbb{C}^8$ \cite{sanpera2001schmidt}
\begin{equation} 
W^{\text{opt}}_r = \mathbb{I}_8 \otimes \mathbb{I}_8 - \frac{8}{r} \hspace{0.1cm} \mathcal{P}_d, \label{optimalwitness}
\end{equation}
where \(\mathcal{P}_d\) is the projector onto the maximally entangled state \(\ket{{\Phi_d}^+} = \frac{1}{\sqrt{8}} \sum_{i=0}^7 \ket{ii}\). The operator \(W^{\text{opt}}_r\) serves as a witness for whether a state has a Schmidt number greater than \(r\). For \(\tilde{\rho}_{AB} (\lambda)\), we get \(\text{Tr}[W^{\text{opt}}_2~\tilde{\rho}_{AB}] < 0\) for \(\lambda \in (0,1]\), which conclusively implies that \(\tilde{\rho}_{AB} \notin \mathcal{S}_2\) in this range. Therefore, we conclude that \(\text{SN}(\tilde{\rho}_{AB}) = 3\) for \(\lambda \in (0,1]\). 

For $\lambda=0$, $\Tilde{\rho}_{AB}(\lambda=0)=\ketbra{\phi_2^+}{\phi_2^+}$. Therefore, $\text{SN}(\Tilde{\rho}_{AB}(\lambda=0))=2$. \qed \\

\noindent \textbf{Proof of Lemma \eqref{lemma2}:} Let Alice and Bob share the bipartite state $\tilde\rho_{AB}(\lambda)$ on $\mathbb{C}^8\otimes\mathbb{C}^8$. The goal is to find the range of $\lambda$ such that the state exhibits nonlocal statistics under suitable choices of incompatible measurements on Alice's and Bob's sides. Let the inputs to Alice and Bob be represented by $x$ and $y$ respectively, where $x, y \in \{0,1\}$. The respective outcomes for Alice and Bob are denoted by $a$ and $b$, where  $a, b \in \{0,1,2\}$. For $x=0$, Alice performs the three-outcome projective measurement $$M_{0}= \{\ket{0}\bra{0}, \ket{1}\bra{1}, \mathbb{1}-\ket{0}\bra{0}- \ket{1}\bra{1}\}.$$ For $x=1$, Alice performs the projective measurement $$M_{1}= \{\ket{+}\bra{+}, \ket{-}\bra{-}, \mathbb{1}-\ket{+}\bra{+}- \ket{-}\bra{-}\}$$ where $\ket{\pm}= \frac{1}{\sqrt{2}}(\ket{0}\pm\ket{1})$. The corresponding measurements for Bob are $N_{0}=\{\ket{\alpha_{1}}\bra{\alpha_{1}}, \ket{\alpha_{2}}\bra{\alpha_{2}}, \mathbb{1}-\ket{\alpha_{1}}\bra{\alpha_{1}}-\ket{\alpha_{2}}\bra{\alpha_{2}}\}$
    where $\ket{\alpha_{1}}= \frac{1}{\sqrt{4+2\sqrt{2}}}\left((1+\sqrt{2})\ket{0}+\ket{1}\right)$, $\ket{\alpha_{2}}= \frac{1}{\sqrt{4-2\sqrt{2}}}\left((1-\sqrt{2})\ket{0}+\ket{1}\right)$, and $N_{1}= \{\ket{\beta_{1}}\bra{\beta_{1}}, \ket{\beta_{2}}\bra{\beta_{2}}, \mathbb{1}-\ket{\beta_{1}}\bra{\beta_{1}}-\ket{\beta_{2}}\bra{\beta_{2}\}}$
    where $\ket{\beta_{1}}= \frac{1}{\sqrt{4+2\sqrt{2}}}\left((1+\sqrt{2})\ket{0}-\ket{1}\right)$, $\ket{\beta_{2}}= \frac{1}{\sqrt{4-2\sqrt{2}}}\left((1-\sqrt{2})\ket{0}-\ket{1}\right)$. It is known that corresponding to every Bell inequality, there exists a nonlocal game and vice versa \cite{silman2008relation}. Let Alice and Bob play a game under these measurement settings and the state $\tilde\rho_{AB}(\lambda)$, with a payoff given by 
\begin{equation}
   \mathscr{J}(\rho) =    \left\{
\begin{array}{ll}
      +1 & \text{for} \hspace{2mm} a \oplus b = xy \hspace{2mm}\text{and} \hspace{2mm} a,b \in \{0,1\} \\
      -1 & \text{for} \hspace{2mm} a \oplus b = \overline{xy} \hspace{2mm}\text{and} \hspace{2mm} a,b \in \{0,1\} \\
      0 &  \text{for} \hspace{2mm} a,b \notin \{0,1\}
\end{array} 
\right.  \label{Bellpayoff}
\end{equation} 
It can be clearly seen that the average payoff of this game reduces to the Clauser-Horne-Shimony-Holt (CHSH) inequality \cite{clauser1969proposed}. This inequality surpasses the local bound of $2$ for $\lambda \in [0, 0.312)$ indicating that the state $\tilde\rho_{AB}(\lambda)$ is CHSH-nonlocal for $\lambda \in [0, 0.312)$. \qed

\end{document}